\documentclass[a4paper, 10pt]{article}
\usepackage{palatino}
\usepackage{amsmath}
\usepackage{amsthm}
\usepackage{amssymb}
\usepackage{latexsym}
\usepackage{multirow}

\newtheorem{theorem}{Theorem}[section]
\newtheorem{definition}[theorem]{Definition}

\newtheorem{proposition}[theorem]{Proposition}
\newtheorem{corollary}[theorem]{Corollary}

\newtheorem{lemma}[theorem]{Lemma}
\newtheorem{construction}[theorem]{Construction}

\def\wt{\textrm{wt}}
\newcommand{\AI}{\textrm{AI}}
\newcommand{\spa}{\textrm{span}}

\title{Constructing and Counting Even-Variable Symmetric Boolean Functions with Algebraic Immunity not Less Than $d$}
\author{Yuan Li, Hui Wang and Haibin Kan}
\date{}

\begin{document}

\maketitle

\begin{abstract}
In this paper, we explicitly construct a large class of symmetric
Boolean functions on $2k$ variables with algebraic immunity not less
than $d$, where integer $k$ is given arbitrarily and $d$ is a given
suffix of $k$ in binary representation. If let $d = k$, our
constructed functions achieve the maximum algebraic immunity.
Remarkably, $2^{\lfloor \log_2{k} \rfloor + 2}$ symmetric Boolean
functions on $2k$ variables with maximum algebraic immunity are
constructed, which is much more than the previous constructions.
Based on our construction, a lower bound of symmetric Boolean
functions with algebraic immunity not less than $d$ is derived,
which is $2^{\lfloor \log_2{d} \rfloor + 2(k-d+1)}$. As far as we
know, this is the first lower bound of this kind.
\end{abstract}

\section{Introduction}
Algebraic attack has received a lot of attention in studying
security of the cryptosystems. If a Boolean function used in stream
ciphers has low degree annihilators, it will be easily attacked.
This adds a new cryptographic property for designing Boolean
functions to be used as building blocks in cryptosystems which is
known as algebraic immunity(AI). Since then algebraic immunity, as a
property of Boolean functions, is widely studied.

Constructing Boolean functions with high AI is interesting and
important. A lot of general methods to construct Boolean functions
with maximum algebraic immunity are proposed
\cite{basictheory_Dalai}, \cite{LiNa1}, \cite{construction_Carlet}.
Results in \cite{LiNa1}, \cite{anewupperbound} show that the number
of general Boolean functions achieving maximum algebraic immunity is
large.

Among all Boolean functions, symmetric Boolean function is an
interesting class and their properties are well studied
\cite{Canteaut_sbf}, \cite{LiNa2}, \cite{Qu1}. In \cite{LiNa2},
\cite{Qu1}, the authors proved that there are only two symmetric
Boolean functions on odd number of variables with maximum AI. In
Braeken's thesis \cite{Braeken}, some symmetric Boolean functions on
even variables with maximum AI are constructed. In \cite{Qu2}, more
such functions are constructed, which generalizes results in
\cite{Braeken}. In \cite{Liao_c2mp1}, by using weight support
technique, all $(2^m+1)$-variable symmetric Boolean functions with
submatrimal algebraic immunity $2^{m-1}$ are constructed.

In this paper, we focus on constructing symmetric Boolean functions
with high algebraic immunity on $2k$ variables, where $k$ is given
arbitrarily. For a given $d$, where $d$ is a suffix of $k$ in binary
representation, we construct a large class of Boolean functions with
AI not less than $d$. Particularly, if let $d=k$, our constructed
Boolean functions achieve maximum AI. Comparing with all the
previous constructions of this kind, the number of our constructed
Boolean functions is much larger. Furthermore, a lower bound of
symmetric Boolean functions with algebraic immunity not less than
$d$ is derived.

\section{Preliminaries}

Let $\mathbb{F}_2$ be the finite field with only two elements. To
prevent confusion with the usual sum, the sum over $\mathbb{F}_2$ is
denoted by $\oplus$. The Hamming weight of a vector $\alpha =
(\alpha_1, \ldots, \alpha_n)$ is defined by $\wt(\alpha) =
\sum_{i=1}^{n}{\alpha_i}$.

A Boolean function on $n$ variables may be viewed as a mapping from
$\mathbb{F}^n_2$ into $\mathbb{F}_2$. We denote by $\mathcal{B}_n$
the set of all $n$-variable Boolean functions. The Hamming weight
${\rm wt}(f)$ is the size of the support ${\rm supp}(f) = \{x \in
\mathbb{F}_2^n \mid f(x)=1\}$. The support of $f$ is also called the
on set of $f$, which is denoted by $1_f$. On the contrary, the off
set of $f$ is the set $\{x \in \mathbb{F}_2^n \mid f(x)=0\}$, which
is denoted by $0_f$. Any $f \in \mathcal{B}_n$ can be uniquely
represented as
\begin{equation}
f(x_1, x_2, \ldots, x_n)  =  \bigoplus_{\alpha \in \mathbb{F}_2^n}
c_\alpha \prod_{i = 1}^{n} x_i^{\alpha_i} = \bigoplus_{\alpha \in
\mathbb{F}_2^n} c_\alpha x^\alpha,
\end{equation}
This kind of expression of $f$ is called the Algebraic Normal
Form(ANF). The algebraic degree of $f$ is the number of variables in
the highest order term with nonzero coefficient, which is denoted by
$\deg(f)$.

A Boolean function is said to be symmetric if its output is
invariant under any permutation of its input bits. For a symmetric
Boolean function $f$ on $n$ variables, we have
\begin{equation}
f(x_1, x_2, \ldots, x_n) = f(x_{\sigma(1)}, x_{\sigma(2)}, \ldots,
x_{\sigma(n)})
\end{equation}
for all permutations $\sigma$ on $\{1, 2, \ldots, n\}$.

This equivalently means that the output of $f$ only depends on the
weight of its input vector. As a consequence, $f$ is related to a
function $v_f : \{0, 1, \ldots, n\} \mapsto \mathbb{F}_2$ such that
$f(\alpha) = v_f(\wt(\alpha))$ for all $\alpha \in \mathbb{F}_2^n$.
The vector $v_f = ($ $v_f(0)$, $v_f(1)$, $\ldots$, $v_f(n))$ is
called the simplified value vector(SVV) of $f$. The set of all
$n$-variable Boolean functions are denoted by $\mathcal{SB}_n$.

\begin{proposition}\cite{Canteaut_sbf}
A Boolean function $f$ on $n$ variables is symmetric if and only if
its ANF can be written as follows:
\begin{equation}
f(x_1, x_2, \ldots, x_n) = \bigoplus_{i=0}^{n}{\lambda_f(i)
\bigoplus_{\substack{\alpha \in \mathbb{F}_2^n\\\wt(\alpha) =
i}}{x^\alpha}} = \bigoplus_{i=0}^{n}{\lambda_f(i) \sigma_i^n},
\end{equation}
where $\sigma_i^n$ is the elementary symmetric polynomial of degree
$i$ on $n$ variables.
\end{proposition}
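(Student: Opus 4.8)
The plan is to prove both implications from the uniqueness of the ANF together with the way a coordinate permutation acts on monomials. For the ``if'' direction, I would first observe that each elementary symmetric polynomial $\sigma_i^n = \bigoplus_{\wt(\alpha)=i} x^\alpha$ is itself symmetric: permuting the variables only permutes the squarefree monomials of degree $i$ among themselves, so the sum is unchanged. Since the class of symmetric functions on $\mathbb{F}_2^n$ is closed under $\oplus$, every combination $\bigoplus_{i=0}^n \lambda_f(i)\sigma_i^n$ is symmetric, which settles this direction at once. Note also that the displayed expression is already in ANF, because the monomials $x^\alpha$ are squarefree and pairwise distinct, so no reduction is needed.

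For the ``only if'' direction, I would start from the unique ANF $f(x)=\bigoplus_{\alpha\in\mathbb{F}_2^n} c_\alpha x^\alpha$. For a permutation $\sigma$ of $\{1,\dots,n\}$, substituting $x_{\sigma(i)}$ for $x_i$ sends the monomial $x^\alpha$ to $x^{\sigma\cdot\alpha}$, where $\sigma\cdot\alpha$ denotes the vector whose coordinates are those of $\alpha$ permuted by $\sigma$; hence $f(x_{\sigma(1)},\dots,x_{\sigma(n)})=\bigoplus_\alpha c_\alpha x^{\sigma\cdot\alpha}=\bigoplus_\beta c_{\sigma^{-1}\cdot\beta}\,x^\beta$, which is again a polynomial with no repeated variables and is therefore the ANF of $f\circ\sigma$. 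If $f$ is symmetric then $f\circ\sigma=f$ for all $\sigma$, and uniqueness of the ANF forces $c_{\sigma^{-1}\cdot\beta}=c_\beta$ for every $\beta$ and every $\sigma$. Since any two vectors of equal Hamming weight are exchanged by some coordinate permutation, $c_\alpha$ depends only on $\wt(\alpha)$; writing $\lambda_f(i)$ for this common value on weight-$i$ vectors and regrouping the sum gives $f=\bigoplus_{i=0}^n\lambda_f(i)\bigoplus_{\wt(\alpha)=i}x^\alpha=\bigoplus_{i=0}^n\lambda_f(i)\sigma_i^n$, as required.

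The only point needing care is making the action of $\sigma$ on the ANF precise and then invoking uniqueness of the ANF over $\mathbb{F}_2$ (where $x_i^2=x_i$, so the reduced multilinear form is canonical) to pass from the functional identity $f\circ\sigma=f$ to the coefficient identities. Everything else is bookkeeping, using transitivity of the symmetric group on each Hamming-weight class, so I anticipate no substantial obstacle.
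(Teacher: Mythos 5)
Your proof is correct: the paper states this proposition without proof, citing Canteaut and Videau, and your argument (symmetry of each $\sigma_i^n$ for the ``if'' direction; uniqueness of the ANF plus transitivity of the symmetric group on each Hamming-weight class for the ``only if'' direction) is exactly the standard argument that reference supplies. No gaps.
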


Then, the coefficients of the ANF of $f$ can be represented by a
$(n+1)$-bit vector, $\lambda_f = (\lambda_f(0), \lambda_f(1),
\ldots, \lambda_f(n))$, called the simplified algebraic normal
form(SANF) vector of $f$.

\begin{proposition}\cite{Canteaut_sbf} Let $f$ be a symmetric Boolean function on $n$
variables. Then, its simplified value vector $v_f$ and its
simplified ANF vector $\lambda_f$ are related by
\begin{equation}
v_f(i) = \bigoplus_{k \preceq i}{\lambda_f(k)} \text{ and }
\lambda_f(i) = \bigoplus_{k \preceq i}{v_f(k)},
\end{equation}
for all $i = 0, 1, \ldots, n$.
\end{proposition}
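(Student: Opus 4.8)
The plan is to obtain the first identity by evaluating the ANF representation $f = \bigoplus_{k=0}^{n} \lambda_f(k)\,\sigma_k^n$ recalled above at an input of prescribed weight, and then to deduce the second identity by showing that the ``subset sum over $\mathbb{F}_2$'' operator appearing on the right-hand side is an involution.

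First I would fix $i \in \{0,1,\ldots,n\}$ and choose any $\alpha \in \mathbb{F}_2^n$ with $\wt(\alpha) = i$; since $f$ is symmetric, $v_f(i) = f(\alpha) = \bigoplus_{k=0}^{n} \lambda_f(k)\,\sigma_k^n(\alpha)$. Writing $\sigma_k^n = \bigoplus_{\wt(\beta)=k} x^\beta$ and noting that $x^\beta(\alpha)=1$ precisely when the support of $\beta$ lies in that of $\alpha$, one gets $\sigma_k^n(\alpha) = \binom{i}{k}\bmod 2$, the count of weight-$k$ vectors $\beta$ with $\beta \preceq \alpha$. Hence $v_f(i) = \bigoplus_{k=0}^{n} \binom{i}{k}\,\lambda_f(k) \pmod 2$, and Lucas' theorem — $\binom{i}{k}$ is odd exactly when the binary expansion of $k$ is covered by that of $i$, i.e. $k \preceq i$ — collapses this sum to $v_f(i) = \bigoplus_{k \preceq i}\lambda_f(k)$, which is the first formula.

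For the second formula I would show that performing this operation twice returns the original vector. For any $g:\{0,\ldots,n\}\to\mathbb{F}_2$ set $(Tg)(i)=\bigoplus_{k\preceq i} g(k)$. Expanding, $(T(Tg))(i) = \bigoplus_{k\preceq i}\bigoplus_{j\preceq k} g(j) = \bigoplus_{j\preceq i} c_{j,i}\,g(j)$, where $c_{j,i}$ is the parity of $\#\{k : j\preceq k\preceq i\}$. This interval of the Boolean lattice has exactly $2^{t}$ elements, with $t$ the number of bit positions in which $i$ is $1$ and $j$ is $0$ (those bits of $k$ may be chosen freely, the remaining ones are forced), so $c_{j,i}=1$ iff $t=0$ iff $j=i$. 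Thus $T\circ T=\mathrm{id}$, and applying $T$ to $v_f = T\lambda_f$ yields $\lambda_f = T v_f$, i.e. $\lambda_f(i)=\bigoplus_{k\preceq i} v_f(k)$.

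The whole argument is bookkeeping; the only steps that are not purely mechanical are the two parity facts — Lucas' theorem for the forward direction and the power-of-two size of Boolean-lattice intervals for the inversion — and I expect no real difficulty beyond stating these carefully.
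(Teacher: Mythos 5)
The paper states this proposition as a cited result from \cite{Canteaut_sbf} and supplies no proof of its own, so there is no in-paper argument to match yours against line by line; judged on its own terms, your proof is correct and complete. Both halves check out: $\sigma_k^n(\alpha)$ indeed equals $\binom{\wt(\alpha)}{k} \bmod 2$ since it counts the weight-$k$ vectors $\beta \preceq \alpha$, Lucas' theorem converts this to the binary covering relation $k \preceq i$, and your involution argument for the operator $T$ is sound because the interval $\{k : j \preceq k \preceq i\}$ has exactly $2^t$ elements, with $t$ the number of bit positions set in $i$ but not in $j$. It is worth observing that your inversion step is precisely the technique the paper itself deploys in Lemma \ref{lemma_inverse}, where the count $\bigoplus_{\gamma \preceq \alpha \preceq \beta} 1 = 2^{\wt(\beta)-\wt(\gamma)}$ vanishes mod $2$ except when $\gamma = \beta$; the only ingredient you supply beyond that is the Lucas-theorem computation needed to pass from the point-indexed ANF to the weight-indexed vectors $v_f$ and $\lambda_f$, and that part is also handled correctly.
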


\begin{definition}\cite{Meier_aaadobf} For a given $f \in \mathcal{B}_n$, a nonzero function
$g \in \mathcal{B}_n$  is called an annihilator of $f$ if $f g=0$
and the algebraic immunity(AI) of $f$, is the minimum degree of all
annihilators of $f$ or $f\oplus 1$, which is denoted by
$\textsf{AI}(f)$.
\end{definition}

Note that $\AI(f) \leq \deg(f)$, since $f(f\oplus 1) = 0$.
Therefore, a function with high AI will not have a low algebraic
degree. It was known from \cite{Courtois1} that for any $f \in
\mathcal{B}_n$, $ \textsf{AI}(f) \leq \lceil \frac{n}{2} \rceil$.

Two Boolean functions $f$ and $g$ are said to be affine equivalent
if there exist $A \in GL_n(\mathbb{F}_2)$ and $b \in \mathbb{F}_2^n$
such that $g(x) = f(xA+b)$. Clearly, algebraic degree, algebraic
immunity are affine invariant.

The binary representation of an integer $a$ is denoted by $(a_m
a_{m-1} \ldots a_0)_2$, such that
\begin{equation}
a = \sum_{i = 0}^m{a_i 2^i}.
\end{equation}
If integer $b$ is ended by $a_1 a_0$ in binary, we often denote by
$b = (*a_1 a_0)_2$, where $*$ represents some $01$ string. For
convenience of the description in the sequel, we introduce the
following notation.

\begin{definition}
Let $a$, $b$ be two nonnegative integers with their binary
representations $(a_m a_{m-1} \ldots a_0)_2$ and $(b_n b_{n-1}
\ldots b_0)_2$, $m \leq n$. If $a_i = b_i$ for all $i = 0, 1, \ldots
m$, we say $a$ is a suffix of $b$ in binary and denote by $a
\preceq' b$. Furthermore, if $a < b$, we say $a$ is a proper suffix
of $b$, which is denoted by $a \prec' b$.
\end{definition}

\section{Main Results}

\begin{lemma}
\label{lemma_inverse} Let $f, g \in \mathcal{B}_n$, integer $0 \leq
d \leq n$. If $f(\alpha) = \bigoplus_{\substack{\beta \preceq
\alpha\\0 \leq \wt(\beta) \leq d}}{g(\beta)} $ for all $\alpha \in
\mathbb{F}_2^n$ with $\wt(\alpha)\leq d$, then $ \label{equ_inv2}
g(\beta) = \bigoplus_{\alpha \preceq \beta}{f(\alpha)} $ for
all$\beta \in \mathbb{F}_2^n$ with $\wt(\beta) \leq d$.
\end{lemma}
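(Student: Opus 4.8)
This is a Möbius-inversion statement on the Boolean lattice, restricted to the lower part $\{\beta : \wt(\beta)\le d\}$. The plan is to substitute the hypothesis into the claimed formula for $g$ and show everything cancels except the single term $g(\beta)$. Fix $\beta$ with $\wt(\beta)\le d$. Then for every $\alpha\preceq\beta$ we have $\wt(\alpha)\le\wt(\beta)\le d$, so the hypothesis applies and gives $f(\alpha)=\bigoplus_{\gamma\preceq\alpha,\ 0\le\wt(\gamma)\le d} g(\gamma)$. Since $\gamma\preceq\alpha\preceq\beta$ already forces $\wt(\gamma)\le d$, the weight restriction in the inner sum is vacuous here, so $f(\alpha)=\bigoplus_{\gamma\preceq\alpha} g(\gamma)$.

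Next I would interchange the order of summation in $\bigoplus_{\alpha\preceq\beta} f(\alpha) = \bigoplus_{\alpha\preceq\beta}\ \bigoplus_{\gamma\preceq\alpha} g(\gamma)$. Collecting the coefficient of a fixed $\gamma\preceq\beta$, it appears once for each $\alpha$ with $\gamma\preceq\alpha\preceq\beta$, and the number of such $\alpha$ is $2^{\wt(\beta)-\wt(\gamma)}$ (choosing an arbitrary subset of the coordinates where $\beta$ is $1$ and $\gamma$ is $0$). Over $\mathbb{F}_2$ this multiplicity is $0$ unless $\wt(\beta)-\wt(\gamma)=0$, i.e.\ unless $\gamma=\beta$; hence the whole double sum collapses to $g(\beta)$, which is exactly the claim.

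The only point that needs care — and the main "obstacle," though it is a mild one — is checking that the weight constraint $\wt(\gamma)\le d$ in the hypothesis never truncates a term that the inversion argument needs, and conversely that we never need a value of $f$ at an $\alpha$ with $\wt(\alpha)>d$. Both are handled by the observation above: whenever $\alpha\preceq\beta$ and $\gamma\preceq\alpha$, the chain of suffix/coordinate-containment relations forces $\wt(\gamma)\le\wt(\alpha)\le\wt(\beta)\le d$, so all restricted sums coincide with the corresponding unrestricted sums and the standard interval-parity argument on the Boolean lattice goes through verbatim. One should also remark that $g(\beta)$ at $\wt(\beta)=0$ (i.e.\ $\beta=0$) is covered, since then the sum over $\alpha\preceq\beta$ has the single term $\alpha=0$ and reads $g(0)=f(0)$, consistent with the hypothesis.
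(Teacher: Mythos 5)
Your proof is correct and follows essentially the same route as the paper's: substitute the hypothesis into $\bigoplus_{\alpha \preceq \beta} f(\alpha)$, swap the order of summation, and observe that each $\gamma \prec \beta$ is counted $2^{\wt(\beta)-\wt(\gamma)} \equiv 0 \pmod 2$ times, leaving only $g(\beta)$. Your extra remark that the weight restriction $\wt(\gamma)\le d$ is vacuous for $\gamma \preceq \alpha \preceq \beta$ is a point the paper passes over silently, and it is worth making explicit.
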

\begin{proof}
By direct computation, for any $\beta \in \mathbb{F}_2^n$ with
$\wt(\beta) \leq d$, we have
\begin{eqnarray*}
\bigoplus_{\alpha \preceq \beta}{f(\alpha)} & = & \bigoplus_{\alpha
\preceq \beta}{\bigoplus_{\gamma \preceq
\alpha}{g(\gamma)}}\\
& = & \bigoplus_{\gamma \preceq \beta}{\Big{(} g(\gamma)
\bigoplus_{\gamma \preceq \alpha \preceq \beta}{1} \Big{)} } \\
& = & \bigoplus_{\gamma \preceq \beta}{2^{\wt(\beta) -
\wt(\gamma)}g(\gamma)} = g(\beta),
\end{eqnarray*}
which completes our proof.
\end{proof}

\begin{lemma}
\label{lemma_majority} Let $f, g \in \mathcal{B}_n$, integer $0 \leq
d \leq n$. If $f(\alpha) = 1$ for all $\alpha \in \mathbb{F}_2^n$
satisfying $0 \leq \wt(\alpha) \leq d$ and $g(\beta) = 1$ for all
$\beta \in \mathbb{F}_2^n$ satisfying $n-d \leq \wt(\beta) \leq n$,
then both $f$ and $g$ do not have annihilators with degree less than
or equal to $d$.
\end{lemma}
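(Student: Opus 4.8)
The plan is to show the contrapositive-style claim directly: suppose $h \in \mathcal{B}_n$ is a nonzero annihilator of $f$ with $\deg(h) \leq d$, and derive a contradiction; the argument for $g$ is symmetric (replace each vector by its complement, which exchanges low-weight and high-weight vectors and preserves the degree bound). Since $f(\alpha) = 1$ for every $\alpha$ with $\wt(\alpha) \leq d$, the condition $fh = 0$ forces $h(\alpha) = 0$ for all $\alpha \in \mathbb{F}_2^n$ with $\wt(\alpha) \leq d$. So it suffices to prove: \emph{a Boolean function of degree at most $d$ that vanishes on every vector of weight $\leq d$ is identically zero.}

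To prove that sub-claim, I would use the Möbius-type inversion already available in Lemma~\ref{lemma_inverse}. Write $h$ in ANF as $h(x) = \bigoplus_{\beta} c_\beta x^\beta$; the ANF coefficient is recovered by $c_\beta = \bigoplus_{\alpha \preceq \beta} h(\alpha)$. Because $\deg(h) \leq d$, the only possibly-nonzero coefficients $c_\beta$ are those with $\wt(\beta) \leq d$. For such a $\beta$, every $\alpha \preceq \beta$ satisfies $\wt(\alpha) \leq \wt(\beta) \leq d$, hence $h(\alpha) = 0$ by hypothesis, and therefore $c_\beta = \bigoplus_{\alpha \preceq \beta} h(\alpha) = 0$. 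Thus all ANF coefficients of $h$ vanish, so $h \equiv 0$, contradicting that $h$ is a nonzero annihilator. This shows $f$ has no annihilator of degree $\leq d$.

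For $g$: if $h$ is a nonzero annihilator of $g$ with $\deg(h) \leq d$, then $h(\beta) = 0$ for all $\beta$ with $\wt(\beta) \geq n-d$. Apply the affine change of variables $x \mapsto x \oplus (1,1,\ldots,1)$, which sends $g$ to a function $g'$ with $g'(\alpha) = 1$ for all $\wt(\alpha) \leq d$ and sends $h$ to $h'$ with $\deg(h') = \deg(h) \leq d$ and $h'(\alpha) = h(\alpha \oplus \mathbf{1}) = 0$ for all $\wt(\alpha) \leq d$; then $h'$ is a nonzero annihilator of $g'$ of degree $\leq d$, contradicting the case just proved. Since $\AI$ is affine invariant this is rigorous. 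I do not anticipate a genuine obstacle here — the only point needing a little care is making sure the degree bound is used to restrict attention to coefficients $c_\beta$ with $\wt(\beta) \leq d$ so that the vanishing hypothesis on low-weight vectors actually reaches every relevant $\alpha \preceq \beta$; everything else is the standard ANF inversion from Lemma~\ref{lemma_inverse}.
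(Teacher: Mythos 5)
Your proposal is correct and follows essentially the same route as the paper: reduce the $g$ case to the $f$ case by the affine substitution $x \mapsto x \oplus \mathbf{1}$, observe that an annihilator $h$ of $f$ must vanish on all points of weight at most $d$, and then use the inversion of Lemma~\ref{lemma_inverse} together with the degree bound to conclude that every ANF coefficient of $h$ is zero. No gaps; the one point you flag as needing care (using $\deg(h)\le d$ to restrict to coefficients $c_\beta$ with $\wt(\beta)\le d$) is exactly the step the paper also relies on.
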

\begin{proof}
Let $g' = g(x_1 \oplus 1, x_2 \oplus, \ldots, x_n \oplus 1)$, which
takes $1$ on all points with weight not exceeding $d$. Since $g'$ is
affine equivalent to $g$, $\AI(g') = \AI(g)$. Therefore, it suffices
to prove $f$ has no annihilator with degree not greater than $d$.

Assuming there is a function $h \in \mathcal{B}_n$ such that $f h =
0$ and $\deg(h) < d$, we will show that $h = 0$. Write $h$ in ANF
\begin{equation*}
h = \bigoplus_{\alpha \in \mathbb{F}_2^n}{c_{\alpha}x^\alpha}.
\end{equation*}
Since for any $\alpha \in \mathbb{F}_2^n$ with $\wt(\alpha) \leq d$,
we have $h(\alpha) = 0$, i.e., $\bigoplus_{\beta \preceq
\alpha}{c_\beta} = 0$. By Lemma \ref{lemma_inverse}, for any $\beta
\in \mathbb{F}_2^n$ with $\wt(\beta) \leq d$, $c_\beta =
\bigoplus_{\alpha \preceq \beta}{h(\alpha)} = 0$. Combining with
$\deg(h) \leq d$, we conclude $h = 0$.
\end{proof}

The following theorem is our main result, which gives a sufficient
condition for a function $f \in \mathcal{SB}_{2k}$ to have algebraic
immunity not less than $d$, where $d$ is a suffix of $k$ in binary.

\begin{theorem}
\label{thm_main} Let $f \in \mathcal{SB}_n$, $n = 2k$, $d \preceq'
k$ and $d \geq 2$. If for any integer $i, j$ with $0 \leq i \leq d -
1$, $n - d + 1 \leq j \leq n$ and
\begin{equation}
k - i \equiv j - k \equiv 2^t \, \mod 2^{t+1}
\end{equation} for some nonnegative integer $t$, $ v_f(i)
= v_f(j) \oplus 1$ holds, then $\AI(f) \geq d$.
\end{theorem}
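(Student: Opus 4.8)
The plan is to...

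The plan is to translate the condition $\AI(f)\ge d$ into a question about the nonsingularity, over $\mathbb{F}_2$, of a ``mixed'' inclusion system built from low- and high-weight slices, and then to use the $2$-adic congruences in the hypothesis to certify, via Lucas'/Kummer's theorem, that the binomial coefficients controlling that system are odd. First, I would reduce to a single side. By definition $\AI(f)\ge d$ means that neither $f$ nor $f\oplus 1$ has an annihilator of degree $\le d-1$. Since $v_{f\oplus 1}=v_f\oplus(1,1,\dots,1)$, the relation $v_f(i)=v_f(j)\oplus 1$ is literally unchanged when $f$ is replaced by $f\oplus 1$, so the hypothesis is self-dual; hence it suffices to prove that, under the stated condition, $f$ has no annihilator of degree $\le d-1$, and then apply this to both $f$ and $f\oplus 1$.

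Next, let $g$ be an annihilator of $f$ with $\deg g\le d-1$, written in ANF as $g=\bigoplus_{\wt(\beta)\le d-1}c_\beta x^\beta$; the goal is $g=0$. Because $f$ is symmetric, $1_f$ is a union of weight slices, and $fg=0$ forces $g$ to vanish on every slice $\{\alpha:\wt(\alpha)=w\}$ with $v_f(w)=1$. For each $i\in\{0,1,\dots,d-1\}$, taking $j=n-i$ (so $j-k=k-i$, and $k-i\equiv 2^t\pmod{2^{t+1}}$ with $t$ the $2$-adic valuation of $k-i$, so the hypothesis applies) gives $v_f(i)=v_f(n-i)\oplus 1$; thus \emph{exactly one} of the slices of weight $i$ and of weight $n-i$ is forced into $1_f$. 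Writing $g(\alpha)=\bigoplus_{\beta\preceq\alpha}c_\beta$ and, on the high slices, passing to complements $\gamma=\bar\alpha$ (so $\beta\preceq\alpha$ iff the supports of $\beta$ and $\gamma$ are disjoint), each $i$ contributes $\binom{n}{i}$ linear equations in the unknowns $(c_\beta)_{\wt(\beta)\le d-1}$ --- either of ``low type'' $\bigoplus_{\beta\preceq\alpha}c_\beta=0$ over $\wt(\alpha)=i$, or of ``high type'' $\bigoplus_{\beta\wedge\gamma=0,\ \wt(\beta)\le d-1}c_\beta=0$ over $\wt(\gamma)=i$. This is a square $\mathbb{F}_2$-system of size $\sum_{i=0}^{d-1}\binom{n}{i}$, and it suffices to show it has only the trivial solution.

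Finally, I would prove nonsingularity by induction on the weight $s=0,1,\dots,d-1$, showing $g$ vanishes on the whole weight-$s$ slice. If $v_f(s)=1$ this is immediate from $fg=0$. If $v_f(s)=0$, then the induction hypothesis (that $g$ vanishes on all slices of weight $<s$) forces $c_\beta=0$ for $\wt(\beta)<s$ by Lemma~\ref{lemma_inverse} --- exactly the Möbius-inversion device already used inside Lemma~\ref{lemma_majority} --- so on the weight-$s$ slice $g(\gamma)=c_\gamma$, while the high-type equations of weight $n-s$ read $\bigoplus_{\beta\wedge\gamma=0,\ s\le\wt(\beta)\le d-1}c_\beta=0$ for all $\wt(\gamma)=s$, and one must solve these for $c_\gamma=0$. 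Here the further congruences $k-i\equiv j-k\equiv 2^t\pmod{2^{t+1}}$ (for all admissible pairs $(i,j)$, not merely $(i,n-i)$) together with $d\preceq' k$ are used: after a suitable reordering of the monomials the coefficient matrix should become triangular with diagonal entries certain binomial coefficients (such as $\binom{n-s}{s}$ and relatives), and each of these is odd precisely because the $2$-adic congruence translates, via Kummer's theorem, into the absence of a carry in the forbidden binary positions when the relevant index is formed.

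I expect this last step to be the main obstacle: selecting the right slice representatives, pinning down the triangular order in which the full strength of the hypothesis (all admissible $(i,j)$) is genuinely needed, and carrying out the carry/parity analysis that makes every diagonal entry odd. The rest --- the reduction of Step~1 and the passage from an annihilator to the square linear system in Step~2 --- is routine bookkeeping once Lemma~\ref{lemma_inverse} is in hand.
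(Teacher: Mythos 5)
Your Steps 1 and 2 (the self-duality reduction to annihilators of $f$ alone, and the assembly of a square system of $\sum_{i=0}^{d-1}\binom{n}{i}$ inclusion equations by choosing, for each $i<d$, whichever of the slices $i$ and $n-i$ lies in $1_f$) coincide with the paper's setup. The gap is in Step 3, and it is not merely a technical obstacle to be smoothed over: the induction on the weight $s$ cannot work in the form you describe. When $v_f(s)=0$ you propose to extract $c_\gamma=0$ for $\wt(\gamma)=s$ from the high-type equations at the single level $n-s$, but those equations involve \emph{all} remaining unknowns $c_\beta$ with $s\le\wt(\beta)\le d-1$; there are only $\binom{n}{s}$ of them against $\sum_{w=s}^{d-1}\binom{n}{w}$ unknowns. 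Already at $s=0$ with $v_f(0)=0$ you have one equation in all $\sum_{i<d}\binom{n}{i}$ unknowns, so no triangularization confined to that level can yield $c_0=0$. Relatedly, your hoped-for mechanism (``diagonal entries are odd binomial coefficients'') is not what actually drives the proof; the relevant quantities are the sums $\bigoplus_{i=0}^{d-1-\wt(\gamma)}\binom{\wt(\alpha)-\wt(\gamma)}{i}$, and what one needs is that they \emph{vanish} mod $2$ for the right $\gamma$.

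The idea you are missing is the one encoded in the congruence $k-i\equiv j-k\equiv 2^t\pmod{2^{t+1}}$: partition all weights into the $2$-adic classes $C_p=\{x\mid x-k\equiv 2^p \bmod 2^{p+1}\}$. The hypothesis forces the low/high choice to be constant on each whole class (not just for each pair $(i,n-i)$), so the chosen equations are $\cup_t E_t$ with $E_t\in\{A_t,B_t\}$, where $A_t$ (resp.\ $B_t$) collects the low- (resp.\ high-) slice equations with weight in $C_t$. The paper then inducts on the class index $p$, not on the weight: each high equation $s_\alpha$ with $\wt(\alpha)\in C_q$ is expanded via Lemma \ref{lemma_inverse} as $\bigoplus_{\gamma\preceq\alpha}s_\gamma\bigl(\bigoplus_{i=0}^{d-1-\wt(\gamma)}\binom{\wt(\alpha)-\wt(\gamma)}{i}\bigr)$, and a Lucas-theorem pairing ($i\leftrightarrow i+2^q$) shows the inner sum is $0$ whenever $\wt(\gamma)\notin\cup_{i\le q}C_i$; hence $B_q$ lies in the span of $\cup_{i\le q}A_i$. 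A dimension count (linear independence from Lemma \ref{lemma_majority} together with $|A_i|=|B_i|$) upgrades this containment of solution spaces to equality, giving $\spa(\cup_i E_i)=\spa(\cup_i A_i)$, and Lemma \ref{lemma_majority} applied to $\cup_i A_i$ forces $g=0$. Without this cancellation identity and the class-by-class induction, the nonsingularity of your square system is exactly the unproven content of the theorem.
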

\begin{proof}
To prove $\AI(f) \geq d$, we need to show $f$ or $f\oplus1$ has no
annihilator with degree less than $d$. Without loss of generality,
we only need to prove $f$ has no annihilator with degree less than
$d$, because it also satisfies the conditions in this theorem by
replacing $f$ by $f \oplus 1$.

Assume there is a function $g \in \mathcal{B}_n$, such that $f g =
0$ and $\deg(g) \leq d - 1$, our aim is to show $g = 0$. Write $g$
in ANF
\begin{equation*}
g = \bigoplus_{\alpha \in \mathbb{F}_2^n}{c_{\alpha}x^\alpha}.
\end{equation*}
Since $\deg(g) \leq d-1$, we have $c_\alpha = 0$ for all
$\wt(\alpha) \geq d$. If $f(\alpha) = 1$, then $g(\alpha) = 0$,
which is
\begin{equation}
\label{equ_sum0} \bigoplus_{\substack{\beta \preceq \alpha \\ 0 \leq
\wt(\beta) \leq d-1}}{c_\beta} = 0.
\end{equation}
Denote equation (\ref{equ_sum0}) on point $\alpha$ by $s_\alpha =
0$. By Lemma \ref{lemma_inverse}, we know $c_\beta = \oplus_{\alpha
\preceq \beta}{s_\alpha}$ for $\wt(\beta) \leq d-1$. We need to
prove that all the equations $s_\alpha=0$, $\alpha \in 1_f$, on
$\sum_{i=0}^{d-1}{\binom{n}{i}}$ variables $c_\beta$, $\wt(\beta)
\leq d-1$, has only zero solution.

To assist our proof, we introduce a decomposition of integers
according to $k$. Let $k=(k_m k_{m-1} \ldots k_0)_2$, then
\begin{equation}
C_p = \left\{
\begin{array}{ll}
\{ x \mid x - k \equiv 2^p \mod 2^{p+1} \}, & 0 \leq p \leq m, \\
\{ x \mid x - k \equiv 0 \mod 2^{m+1} \}, & p = m+1.
\end{array}
\right.
\end{equation}
In other words, $C_p$, $0 \leq p \leq m$ contains all integers with
binary representation $(*\overline{k_p}k_{p-1}\cdots k_0)_2$ and
$C_{m+1}$ contains all integers with binary representation $(*k_m
k_{m-1}\cdots k_0)_2$. It's easy to see $C_p$, $p = 0, 1, \ldots,
m+1$ is a decomposition of all integers and $[0, d-1] \cup [n-d+1,
n]\subseteq \cup_{i=0}^{\lfloor \log_2{d}\rfloor}{C_i}$.

For convenience of the following description, we define some
collections of equations, say $A_i, B_i$ and $E_i$, where
\begin{equation}
\left.
\begin{array}{l}
A_i  =  \{s_\alpha=0 \mid \alpha \in \mathbb{F}_2^n, \wt(\alpha) \in
[0, d-1] \text{ and }
\wt(\alpha) \in C_i \},\\
B_i  =  \{s_\alpha=0 \mid \alpha \in \mathbb{F}_2^n, \wt(\alpha)
\in [n-d+1, n] \text{ and } \wt(\alpha) \in C_i \},\\
E_i \in \{A_i, B_i\},
\end{array}
\right.
\end{equation}
for $i = 0, 1, \ldots, \lfloor \log_2{d} \rfloor$. Now, we use math
induction to prove that $A_0$ or $B_0$, union $A_1$ or $B_1$,
$\ldots$, union $A_p$ or $B_p$, denoted by $\cup_{i=0}^{p}E_i$, has
the same solution space with $\cup_{i=0}^{p} A_i$, i.e.,
$\spa(\cup_{i=0}^p{E_i}) = \spa(\cup_{i=0}^p{A_i})$, for $p = 0, 1,
\ldots, \lfloor \log_2{d} \rfloor.$ The induction parameter is $p$.

\textbf{Basis step: $p = 0$.} First, we will prove that the solution
space of $A_0$ is a subspace of that of $B_0$ by representing all
the equations in $B_0$ as linear combinations of equations in $A_0$.
Take an arbitrary equation $s_\alpha = 0$ in $B_0$, expanding
$s_\alpha$ as follows,
\begin{eqnarray}
s_\alpha & = & \bigoplus_{\substack{\beta \preceq \alpha \\
0 \leq \wt(\beta) \leq d-1}}{c_\beta}  =  \bigoplus_{\substack{\beta \preceq \alpha \\
0 \leq \wt(\beta) \leq d-1}}{\bigoplus_{\gamma \preceq
\beta}{s_\gamma}} \nonumber\\
&= & \bigoplus_{\substack{\gamma \preceq \alpha\\0 \leq \wt(\gamma)
\leq d - 1}}{\Big{(}s_\gamma \bigoplus_{\substack{\gamma \preceq
\beta \preceq \alpha\\0 \leq \wt(\beta) \leq d-1}}{1} \Big{)}} \nonumber\\
\label{equ_lastc} & = & \bigoplus_{\substack{\gamma \preceq
\alpha\\0 \leq \wt(\gamma) \leq d-1}}{\Bigg{(} s_\gamma \bigoplus_{i
= 0}^{d - 1 - \wt(\gamma)}{\binom{\wt(\alpha) - \wt(\gamma)}{i}}
\Bigg{)}}.
\end{eqnarray}

Considering $s_\gamma$ in the \eqref{equ_lastc}, where $\wt(\gamma)
\not\in C_0$, we want to show the coefficient of $s_\gamma$ is $0$.
By Lucas' formula, we know $\binom{\wt(\alpha)-\wt(\gamma)}{i}=1$
over $\mathbb{F}_2$ if and only if $i \preceq
\wt(\alpha)-\wt(\gamma)$. Note that $\wt(\alpha) - \wt(\gamma) = (*
\overline{k_0})_2 - (* k_0)_2 = (* 1)_2$ and $d - 1 - \wt(\gamma) =
(* k_0)_2 - 1 - (* k_0)_2 = (*1)_2$. Hence, if $i = (\cdots i_2 i_1
0)_2$ satisfies $i \preceq \wt(\alpha) - \wt(\gamma)$ and $i \leq d
- 1 - \wt(\gamma)$, then $i+1 = (\cdots i_2 i_1 1)_2$ also satisfies
the above constraints and vice versa. We conclude that an $i$ ended
by $0$ in its binary representation satisfying $i \preceq
\wt(\alpha)-\wt(\gamma)$ must correspond with another $i$ ended by
$1$ in the inner sum of \eqref{equ_lastc}. Thus, $\bigoplus_{i =
0}^{d - 1 - \wt(\gamma)}{\binom{\wt(\alpha) - \wt(\gamma)}{i}} = 0$
when $\gamma \not \in C_0$, and all equations in $B_0$ could be
represented as linear combinations of those in $A_0$. Therefore a
solution of equations $A_0$ is also a solution of $B_0$, which
implies the solution space of $A_0$ is a subspace of that of $B_0$.

By Lemma \ref{lemma_majority}, it's easy to see equations in both
$A_0$ and $B_0$ are linearly independent. Since they have the same
size, the dimensions of both solution spaces are the same.
Therefore, the solution spaces of $A_0$ and $B_0$ are the same,
which completes the basis step for $p = 0$.

\textbf{Induction step: assuming the proposition is true for $p = q
- 1$, $q \geq 1$, we will prove it's also true for $p = q$.}

First, we will prove the solution space of $\cup_{i = 0}^{q}A_i$ is
a subspace of that of $\cup_{i = 0}^{q-1}{A_i} \cup B_q$. Taking an
arbitrary $s_\alpha = 0$ in $B_q$, we want to show $s_\alpha$ can be
represented as linear combinations of equations in $\cup_{i =
0}^{q}A_i$. Similar with the method in basis step, expand $s_\alpha$
as
\begin{equation}
\label{equ_lastc2} \bigoplus_{\substack{\gamma \preceq \alpha\\0
\leq \wt(\gamma) \leq d-1}}{\Bigg{(} s_\gamma \bigoplus_{i = 0}^{d -
1 - \wt(\gamma)}{\binom{\wt(\alpha) - \wt(\gamma)}{i}} \Bigg{)}}.
\end{equation}
The key is to show $\bigoplus_{i = 0}^{d - 1 -
\wt(\gamma)}{\binom{\wt(\alpha) - \wt(\gamma)}{i}} = 0$ when
$\wt(\gamma) \not \in \cup_{i = 0}^{q}{C_i}$. Take an arbitrary
$\gamma$ such that $\wt(\gamma) \not \in \cup_{i = 0}^{q}{C_i}$.
Noting that $\wt(\alpha) = (*\overline{k_q} k_{q-1} \cdots k_0)_2$,
$\wt(\gamma) = (*k_q k_{q-1} \ldots k_0)_2$ and $d = (k_{\lfloor
\log_2{d}\rfloor}\cdots k_q k_{q-1} \cdots k_0)_2 - 1$, we have
$\wt(\alpha) - \wt(\gamma) = (*1\underbrace{0\cdots0}_{q \text{
times}})_2$ and $d - 1 - \wt(\gamma) = (*1\underbrace{1\cdots1}_{q
\text{ times}})_2$. It's easy to see that if there is an $i =
(*0i_{q-1} \cdots i_0)_2$, $0 \leq i \leq d - 1 - \wt(\gamma)$,
satisfying $\binom{\wt(\alpha) - \wt(\gamma)}{i} = 1$, i.e., $i
\preceq \wt(\alpha) - \wt(\gamma)$, then $i + 2^q = (*1i_{q -
1}\cdots i_0)_2$ also satisfies $i + 2^q \preceq \wt(\alpha) -
\wt(\gamma)$ and $i + 2^q \leq d - 1 - \wt(\gamma)$ and vice versa.
Since this correspondence is one on one, the $1's$ in the inner sum
of \eqref{equ_lastc2} can be divided into pairs. Therefore,
$\bigoplus_{i = 0}^{d - 1 - \wt(\gamma)}{\binom{\wt(\alpha) -
\wt(\gamma)}{i}} = 0$ and all equations in $B_q$ can be written as
sums of equations in $\cup_{i=0}^q{A_i}$. We conclude that the
solution space of $\cup_{i = 0}^{q}A_i$ is a subspace of that of
$\cup_{i = 0}^{q-1}{A_i} \cup B_q$.

By induction hypothesis,
\begin{equation*}
\spa(\cup_{i = 0}^{q-1}{A_i} \cup B_q) = \spa(\cup_{i =
0}^{q-1}{B_i} \cup B_q) = \spa(\cup_{i = 0}^{q}{B_i}).
\end{equation*} And by Lemma \ref{lemma_majority}, it's not hard to see there is no linear
dependence in $\cup_{i = 0}^{q}{B_i}$ as well as in $\cup_{i =
0}^{q}{A_i}$. Note that $|\cup_{i=0}^{q}{A_i}| =
|\cup_{i=0}^{q}{B_i}|$, the dimensions of the solution spaces of
$\cup_{i = 0}^{q}A_i$ and $\cup_{i = 0}^{q-1}{A_i} \cup B_q$ are the
same. Combining with the fact that solution space of $\cup_{i =
0}^{q}A_i$ is a subspace of that of $\cup_{i = 0}^{q-1}{A_i} \cup
B_q$, we claim these two solution spaces are exactly the same. Using
induction hypothesis again, we have
\begin{eqnarray*}
\spa(\cup_{i=0}^{q}{A_i}) & = & \spa(\cup_{i = 0}^{q-1}{A_i} \cup
B_q) \\ & = & \spa(\cup_{i = 0}^{q-1}{E_i} \cup B_q) \\ & = &
\spa(\cup_{i = 0}^{q}{E_i}),
\end{eqnarray*}
which completes the induction.

Now, let's go back to the original problem that proving $g = 0$. By
the conditions in this theorem, for any $\alpha \in \mathbb{F}_2^n$,
$\wt(\alpha) \in C_t \cap [0, d - 1]$, we have $f(\alpha) = m$; for
any $\alpha \in \mathbb{F}_2^n$, $\wt(\alpha) \in C_t \cap [n-d+1,
n]$, we have $f(\alpha) = m \oplus 1$, where $m = 0$ or $1$. If $m =
1$, we could list equations on the point $\alpha$, where
$\wt(\alpha) \in C_t \cap [0, d-1]$, which is exactly the equations
set $A_t$. If $m = 0$, we could list equations on the point
$\alpha$, where $\wt(\alpha) \in C_t \cap [n-d+1, n]$, which is
exactly the equations set $B_t$. If let $t$ run over from $0$ to
$\lfloor \log_2{d} \rfloor$, we obtain equations
$\cup_{i=0}^{\lfloor \log_2{d} \rfloor}E_i$, which is equivalent to
$\cup_{i=0}^{\lfloor \log_2{d} \rfloor}A_i$. By Lemma
\ref{lemma_majority}, $\cup_{i=0}^{\lfloor \log_2{d} \rfloor}A_i$
has only zero solution, thus $\cup_{i=0}^{\lfloor \log_2{d}
\rfloor}E_i$ has only zero solution. Therefore, $g=0$ and the proof
is complete.
\end{proof}

\begin{construction}
\label{cons} Given two positive integers $k$, $d$, where $d \preceq'
k$ and $2 \leq d \leq k$, we construct a function $f$ in
$\mathcal{SB}_{2k}$ as follows.
\begin{itemize}
\item Choose $\lfloor \log_2{d} \rfloor + 1$ numbers in
$\mathbb{F}_2$ arbitrarily, denoted by $m_0$, $m_1$, $\ldots$,
$m_{\lfloor \log_2{d} \rfloor}$.
\item Define a symmetric Boolean function $f$ through it's simplified value vector, which is
\begin{equation}
v_f(i) = \left \{
\begin{array}{ll}
m_t, & i \in C_t \cap [0, d-1], \\
m_t \oplus 1, & i \in C_t \cap [n-d+1, n], \\
0 \text{ or } 1, & \text{otherwise}.
\end{array}
\right.
\end{equation}
\end{itemize}
\end{construction}

By Theorem \ref{thm_main}, $\AI(f) \geq d$ for $f$ in Construction
\ref{cons}. We present an example here to illustrate our
construction. Let $k = 6 = (110)_2$ and $d = k$. We have $C_0 = \{1,
3, 5, 7, 9, 11, \ldots\}$, $C_1 = \{0, 4, 8, 12, \ldots\}$ and $C_2
= \{2, 10, \ldots\}$. Therefore, constraints $v_f(1) = v_f(3) =
v_f(5) = v_f(7) \oplus 1 = v_f(9) \oplus 1= v_f(11) \oplus 1$,
$v_f(0) = v_f(4) = v_f(8) \oplus 1 = v_f(12) \oplus 1$ and $v_f(2) =
v_f(10) \oplus 1$ must be satisfied. Let $m_0, m_1, m_2 \in
\mathbb{F}_2$ take over all the $8$ combinations, we obtain the
following $8$ functions with maximum algebraic immunity in Table 1.

\begin{table}[h]
\caption{Functions in $\mathcal{SB}_8$ with maximum AI} \centering
\begin{tabular}{c|c|c}
\hline
$m_0 m_1 m_2$ & SVV:$v_f(0)\ldots v_f(12)$ & SANF:$\lambda_f(0) \ldots \lambda_f(12)$\\
\hline
       000 & 0000000111111 & 0000000110000\\
\hline
       000 & 0000001111111 & 0000001010000 \\
\hline
       001 & 0010000111011 & 0011001010000 \\
\hline
       001 & 0010001111011 & 0011000110000 \\
\hline
       010 & 1000100101110 & 1111000110000 \\
\hline
       010 & 1000101101110 & 1111001010000 \\
\hline
       011 & 1010100101010 & 1100001010000 \\
\hline
       011 & 1010101101010 & 1100000110000 \\
\hline
       100 & 0101010010101 & 0100000110000 \\
\hline
       100 & 0101011010101 & 0100001010000 \\
\hline
       101 & 0111010010001 & 0111001010000 \\
\hline
       101 & 0111011010001 & 0111000110000 \\
\hline
       110 & 1101110000100 & 1011000110000\\
\hline
       110 & 1101111000100 & 1011001010000 \\
\hline
       111 & 1111110000000 & 1000001010000 \\
\hline
       111 & 1111111000000 & 1000000110000 \\

\end{tabular}
\end{table}

\begin{corollary}
The number of symmetric Boolean functions on $2k$ variables, with
algebraic immunity greater than or equal to $d$, $d \geq 2$ and $d
\preceq' k$, is not less than
\begin{equation}
2^{\lfloor \log_2{d} \rfloor + 2(k-d+1)}.
\end{equation}
\end{corollary}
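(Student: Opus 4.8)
The plan is to show that Construction~\ref{cons} already produces at least $2^{\lfloor \log_2 d\rfloor + 2(k-d+1)}$ pairwise distinct symmetric functions; since every one of them has algebraic immunity at least $d$ by Theorem~\ref{thm_main}, the bound follows. Because a symmetric function on $2k$ variables is uniquely determined by its simplified value vector $v_f=(v_f(0),\dots,v_f(2k))$, it suffices to count the distinct vectors $v_f$ the construction can output. I will do this by exhibiting the ``free data'' of the construction as a set of independent bits, checking that the map (free data) $\mapsto v_f$ is injective, and then multiplying.

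The free data split into two blocks. The first block is the $\lfloor \log_2 d\rfloor + 1$ bits $m_0,\dots,m_{\lfloor \log_2 d\rfloor}$. The second block is $v_f$ restricted to the complement of the constrained index set $R:=[0,d-1]\cup[2k-d+1,2k]$: since $d\leq k$ forces $d-1<2k-d+1$, the two intervals making up $R$ are disjoint and each contains $d$ integers, so $R$ has $2d$ elements and $\{0,1,\dots,2k\}\setminus R$ has $2k+1-2d$ elements, each an unconstrained bit of $v_f$. Counting both blocks gives $2^{(\lfloor \log_2 d\rfloor+1)+(2k+1-2d)}=2^{\lfloor \log_2 d\rfloor+2(k-d+1)}$ choices.

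To finish I need injectivity, i.e. that distinct free data yield distinct $v_f$. The entries $v_f(i)$ with $i\notin R$ are literally the second block, hence recoverable. For the first block it is enough to show $C_t\cap[0,d-1]\neq\emptyset$ for every $0\leq t\leq \lfloor\log_2 d\rfloor$, for then $m_t$ equals $v_f(i)$ at any such index $i$ and so is recoverable from $v_f$. This nonemptiness I would verify by writing $d=(k_{\lfloor\log_2 d\rfloor}\cdots k_0)_2$ using $d\preceq' k$ (so its leading bit $k_{\lfloor\log_2 d\rfloor}=1$): for $t<\lfloor\log_2 d\rfloor$ the integer with binary expansion $\overline{k_t}k_{t-1}\cdots k_0$ lies in $C_t$ and is strictly less than $2^{t+1}\leq 2^{\lfloor\log_2 d\rfloor}\leq d$; for $t=\lfloor\log_2 d\rfloor$ the integer $d-2^{\lfloor\log_2 d\rfloor}$ lies in $C_t$ and in $[0,d-1]$. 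Combined with the uniqueness of $v_f$ for a symmetric function and with Theorem~\ref{thm_main}, this yields the claimed lower bound.

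I expect the only delicate point to be this injectivity bookkeeping: making sure each $m_t$ is a genuine degree of freedom (the nonemptiness of $C_t\cap[0,d-1]$) and that exactly $2k+1-2d$ positions of $v_f$ remain unconstrained. Everything else — that the construction lands in $\{f:\AI(f)\geq d\}$, and that the count of choices is as above — is immediate from Theorem~\ref{thm_main} and the bijection between symmetric functions and their simplified value vectors.
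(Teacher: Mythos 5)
Your proposal is correct and follows essentially the same route as the paper: enumerate the functions of Construction~\ref{cons}, count the $\lfloor\log_2 d\rfloor+1$ free bits $m_t$ plus the $2k+1-2d$ unconstrained positions $v_f(i)$, $i\in[d,2k-d]$, and justify injectivity by showing $C_t\cap[0,d-1]\neq\emptyset$ (your witness $d-2^{\lfloor\log_2 d\rfloor}$ for $t=\lfloor\log_2 d\rfloor$ is the same element $(\overline{k_t}k_{t-1}\cdots k_0)_2$ the paper uses). Your bound $(\overline{k_t}\cdots k_0)_2<2^{t+1}\leq d$ for $t<\lfloor\log_2 d\rfloor$ is a slightly more explicit version of the step the paper calls obvious; otherwise the two arguments coincide.
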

\begin{proof}
We prove this by enumerating all the functions in Construction
\ref{cons}. There are $\lfloor \log_2{d} \rfloor + 1$ numbers on
$\mathbb{F}_2$ could be chosen arbitrarily. To show different
choices will generate different functions, it's sufficient to prove
$C_t \cap [0, d-1] \not = \emptyset$. If $0 \leq t \leq \lfloor
\log_2{d} \rfloor - 1$, it's obvious that $(\overline{k_t} \cdots
k_1 k_0)_2 \in C_t$ and $(\overline{k_t} \cdots k_1 k_0)_2 <
(k_{\lfloor \log_2{d} \rfloor}\cdots k_1 k_0)_2 = d$. If $t =
\lfloor \log_2{d} \rfloor$, $(\overline{k_t} \cdots k_1 k_0)_2 \in
C_t$. Because $k_t = 1$, we have  $(\overline{k_t} \cdots k_1
k_0)_2< d$.

Since the number of all choices for $m_0, m_1, \ldots, m_{\lfloor
\log_2{d} \rfloor}$ is $2^{\lfloor \log_2{d} \rfloor + 1}$ and
$v_f(i)$ could take either $0$ or $1$ when $i \in [d, n-d]$, the
total number of such of $f$ can be constructed is
\begin{equation*}
2^{\lfloor \log_2{d} \rfloor + 1 + n-d-d + 1} = 2^{\lfloor \log_2{d}
\rfloor + 2(k-d+1)},
\end{equation*}
which completes our proof.
\end{proof}

We present another example here to illustrate our counting result.
Let $k = 13 = (1101)_2$, $d=5=(101)_2 \prec' k$. Hence $C_0 = \{0,
2, 4, 6, \ldots \}$, $C_1 = \{3, 7, \ldots \}$ and $C_2 = \{1, 9,
\ldots \}$. For arbitrary $m_0, m_1, m_2 \in \mathbb{F}_2$, $m_0 =
v_f(0) = v_f(2) = v_f(4) = v_f(26)\oplus 1 = v_f(24) \oplus 1 =
v_f(22) \oplus 1$, $m_1 = v_f(3) = v_f(23)\oplus1$ and $m_2 = v_f(1)
= v_f(25)\oplus1$ must be satisfied, while the others bits could
take $0$ or $1$ arbitrarily. Let $m_0, m_1, m_2$ run over all $8$
combinations, $2^{20}$ functions $\in \mathcal{SB}_{26}$ are
constructed and listed in Table 2.

\begin{table}[h]
\caption{Functions in $\mathcal{SB}_{26}$ with AI not less than $5$}
\centering
\begin{tabular}{c|c}
\hline
$m_0 m_1 m_2$ & SVV:$v_f(0) v_f(1)\ldots v_f(26)$\\
\hline
       000 & 00000???$\cdots$???11111\\
\hline
       001 & 01000???$\cdots$???11101\\
\hline
       010 & 00010???$\cdots$???10111\\
\hline
       011 & 01010???$\cdots$???10101\\
\hline
       100 & 10101???$\cdots$???01010\\
\hline
       101 & 11101???$\cdots$???01000\\
\hline
       110 & 10111???$\cdots$???00010\\
\hline
       111 & 11111???$\cdots$???00000\\
\end{tabular}
\end{table}

\end{document}